
\documentclass[11pt]{article}

\usepackage{amsmath,amsfonts,amsthm,amssymb,bm} 

\usepackage{fullpage}


\newtheorem{theorem}{Theorem}[section]
\newtheorem{lemma}[theorem]{Lemma}
\newtheorem{definition}{Definition}[section]

\usepackage{pstricks,pst-node}

\usepackage{mathtools}

\newcommand{\NN}{\mathbb{N}}
\newcommand{\RR}{\mathbb{R}}
\newcommand{\eps}{\epsilon}



\begin{document}
\title{A Deterministic Protocol for Sequential Asymptotic Learning\thanks{
Part of this work was done while Yu Cheng was a student at the University of Southern California.
Yu Cheng was supported in part by Shang-Hua Teng's Simons Investigator Award.
Omer Tamuz was supported in part by Grant \#419427 from the Simons Foundation.}} 
\author{
  Yu Cheng \\ Duke University \and
  Wade Hann-Caruthers \\ California Institute of Technology \and
  Omer Tamuz \\ California Institute of Technology
}
\date{}

\maketitle


\begin{abstract}
In the classic herding model, agents receive private signals about an underlying binary state of nature, and act sequentially to choose one of two possible actions,  after observing the actions of their predecessors.

We investigate what types of behaviors lead to asymptotic learning,
where agents will eventually converge to the right action in probability.
It is known that for rational agents and bounded signals, there will not be asymptotic learning. 

Does it help if the agents can be cooperative rather than act selfishly? This is simple  to achieve if the agents are allowed to use randomized protocols. In this paper, we provide the first \emph{deterministic} protocol under which asymptotic learning occurs. In addition, our protocol has the advantage of being much simpler than previous protocols.
\end{abstract}


\section{Introduction}

When making decisions, we often have some amount of information which we have learned on our own, but we also look to what choices other people have made in the past.
In sequential learning models, there is a sequence of agents $\NN = \{1,2,\ldots\}$ who are interested in learning an unknown state of the world $\theta\in\{0,1\}$. Each agent $i$ receives a signal $s_i$ that depends on the state, takes an action $a_i \in \{0,1\}$, and receives a utility of $1$ if the action matches the state, and utility $0$ otherwise. Each agent also observes the actions of her predecessors before choosing her own. 

The classical result of~\cite{banerjee1992simple,bikhchandani1992theory} is that from some point on, rational agents will disregard their own private signal and emulate the actions of their predecessors, resulting in a {\em herd} in which almost all the information is lost. In particular, {\em asymptotic learning} does not occur: the probability that the $i$\textsuperscript{th} agent chooses the correct action does not tend to one, as it would if the agents could observe the signals (rather than the actions) of their predecessors. 

A natural question is whether non-selfish agents can coordinate on a protocol in which they do not always choose the action that is optimal given their information, but where asymptotic learning does occur. Achieving this with probabilistic protocols is straightforward~\cite{acemoglu2011bayesian,drakopoulos2013learning}: for example~\cite{acemoglu2011bayesian}, one could have the $i$\textsuperscript{th} agent choose an action that reveals her private signal with probability $1/i$, and choose her optimal action with probability $1-1/i$.

In this paper, we show that asymptotic learning can also be achieved using a deterministic protocol.

\begin{theorem}
\label{thm:main}
There is a deterministic protocol that achieves asymptotic learning.
\end{theorem}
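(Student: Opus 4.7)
The plan is to construct a deterministic protocol by derandomizing the randomized scheme of \cite{acemoglu2011bayesian}, which asks agent $i$ to reveal her signal with probability $1/i$ and otherwise play the Bayes-optimal action. Concretely, for each $i$ I will fix a deterministic ``revelation set'' $B_i$ in the signal space of measure $\epsilon_i$, where $\epsilon_i \to 0$ but $\sum_i \epsilon_i = \infty$ (say $\epsilon_i = 1/i$). Let $E_i \in \{0,1\}$ denote the public consensus at step $i$, defined as the Bayes estimate of $\theta$ given the observable history $a_1,\ldots,a_{i-1}$ and the (commonly known) protocol. Agent $i$ then plays $a_i = E_i$ if $s_i \notin B_i$, and otherwise plays $a^\ast(s_i)$, the action favored by $s_i$ alone under a uniform prior. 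This is fully deterministic, and both $E_i$ and the partition of the signal space are computable from the history.

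The proof splits into two parts. First, I establish that the consensus converges: $\Pr[E_i = \theta] \to 1$. The key observation is that $a_i \neq E_i$ \emph{unambiguously} certifies a revelation event, since in the non-revealing branch the agent plays exactly $E_i$; and revelations that disagree with the current consensus are strictly more probable when $E_i \neq \theta$ than when $E_i = \theta$. Thus each observed action contributes a bounded, signed increment to the public log-likelihood ratio, with the correct sign in expectation. Because $\sum_i \epsilon_i = \infty$, the expected number of informative revelations diverges, and a martingale / strong-law argument on the log-likelihood process then forces it to $\pm\infty$ in the direction of the true $\theta$. Second, for each agent $i$ one has the straightforward bound
\[
\Pr[a_i = \theta] \;\geq\; \Pr[s_i \notin B_i,\ E_i = \theta] \;=\; (1-\epsilon_i)\,\Pr[E_i = \theta],
\]
and both factors tend to $1$, proving the theorem.

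The main technical obstacle is the consensus-convergence argument. Because the revelation decision depends on the private signal $s_i$, subsequent agents cannot directly observe which branch was taken --- only the binary outcome $a_i$ is public. In the Acemoglu et al.\ protocol the branch indicator is itself public (generated from common randomness), which trivializes the Bayesian update. Here I must exploit the structural asymmetry noted above: $a_i \neq E_i$ uniquely identifies a revelation, whereas $a_i = E_i$ mixes both branches. A careful choice of $B_i$ (e.g., as the set of signal values most informative about $\theta$) ensures that the information conveyed by disagreement events is enough to drive the public posterior to the truth, while $\epsilon_i \to 0$ keeps the revelation branch's contribution to per-agent error vanishingly small.
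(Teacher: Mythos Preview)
Your construction breaks at the very first step in this model: the private signals are \emph{binary}, $s_i \in \{0,1\}$, so the signal space has only four subsets and none of them can have probability $\epsilon_i = 1/i$ once $i$ is large. Each atom has mass bounded away from zero by $\min(q_\theta,\,1-q_\theta) > \epsilon$, so there is no deterministic ``revelation set'' $B_i \subseteq \{0,1\}$ of vanishing probability. Any rule of the form ``reveal iff $s_i \in B_i$'' either never reveals (and then no information accumulates) or reveals with probability bounded below by a constant, in which case your own lower bound $\Pr[a_i=\theta] \ge (1-\epsilon_i)\Pr[E_i=\theta]$ is useless because $\epsilon_i$ does not tend to zero. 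Your scheme implicitly assumes a nonatomic signal space, which is exactly the regime the paper is \emph{not} in.

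The paper's key idea, which your proposal is missing, is that with binary signals the only available source of small-probability events is the \emph{history} itself. The protocol lets the revealed bits select who reveals next: agents are placed on a complete binary tree, and the $k$-th revealing agent is the one in the block $\{2^{k-1},\ldots,2^k-1\}$ whose offset equals the integer whose binary digits are $(a_{t_1},\ldots,a_{t_{k-1}})$. Then ``agent $n$ is a revealing agent'' requires a specific sequence of $k-1 \approx \log_2 n$ independent Bernoulli outcomes, which has probability at most $(1-\epsilon)^{k-1} \le n^{-\epsilon}$. Non-revealing agents average the $k$ available revealed bits (plus their own signal), compare to $(q_0+q_1)/2$, and a Chernoff bound gives error at most $n^{-\epsilon^2}$. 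A union bound over the two failure modes yields $p_n \ge 1 - 2n^{-\epsilon^2}$. In short, the derandomization must route through the action history, not through a partition of the two-point signal space.
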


At the heart of the problem is a trade-off between optimization and communication. In order for asymptotic learning to occur, agents must rely more and more on the information provided by their predecessors' actions, since their own private signals have a uniformly bounded amount of information. However, if agents rely too much on previous observations, they will pass too little information on through their own actions. The question of whether there are any deterministic protocols under which asymptotic learning occurs is essentially the following: are there any deterministic protocols which can strike the right balance between agents taking optimal actions and agents using their actions to communicate their private information to their successors?



Our work raises some potentially interesting questions: in what reasonable economic settings do selfish agents naturally coordinate and achieve asymptotic learning in equilibrium? And how could one incentivize them to do so in other settings? 


\subsection{Related Work}
The classical sequential learning papers~\cite{banerjee1992simple,bikhchandani1992theory} have been followed by a large literature in Economics (see, e.g.,~\cite{smith2000pathological,acemoglu2011bayesian,eyster2010naive,hann2017speed}, as well as Chamley's book~\cite{chamley2004rational}) and  engineering (see, e.g.,~\cite{lobel2009rate,drakopoulos2013learning}). Smith and S{\o}rensen~\cite{smith2000pathological} showed that when private signals are unbounded (i.e., there is no limit to how strong an indication a private signal can give) then agents do learn the state in the sequential setting; this observation has motivated us---and others---to understand under what other conditions learning can be achieved. As far as we know,~\cite{drakopoulos2013learning} is the only other paper that studies related algorithmic questions. In particular, in~\cite{drakopoulos2013learning} it is shown how agents with bounded memory who observe only some of their predecessors can achieve asymptotic learning using a randomized protocol.



\section{Model}
\label{sec:model}
There is a countably infinite set of agents indexed by $i \in \NN$,
  acting sequentially and trying to learn the hidden state of nature $\theta$.
In this paper, we focus on the most basic setting where both the underlying state and the private signals are binary.

The state of nature $\theta$ is a random variable which takes values in $\{0, 1\}$.
There are two Bernoulli distributions $D_{0}$ and $D_{1}$ with parameters $q_0$ and $q_1$ respectively, i.e., $D_{0}$ takes the value 1 with probability $q_0$ and similarly for $D_{1}$.
The agents know the distributions $D_{0}$ and $D_{1}$, but not the state of nature $\theta$.
Each agent $i$ receives a private signal $s_i \in \{0, 1\}$.
The signals are i.i.d conditioned on $\theta$, drawn from the distribution $D_{\theta}$.
We are interested in the case where the private signals are bounded, that is, both signals are possible regardless of the state.
We assume without loss of generality that $0 < q_0 < q_1 < 1$.
A simple example is that there is a 60/40 biased coin,
  but it could be either biased to flip more heads or tails.
Agents receive independent coin flips and together want to distinguish which coin it is.

The agents act sequentially with each agent $i$ choosing an action $a_i \in \{0, 1\}$.
Before making her choice, agent $i$ can observe her own signal $s_i$,
  as well as the \emph{actions} $\{a_1, \ldots, a_{i-1}\}$ of the previous agents,
  but crucially \emph{not} the private signals of others.
We say that there is \emph{asymptotic learning} if the actions of the agents converge to the true state $\theta$ in probability.

\begin{definition}[Asymptotic Learning]
\label{def:learning}
Let $p_i = \Pr[a_i = \theta]$ be the probability of agent $i$ choosing the correct action.
There is asymptotic learning if and only if $\liminf_{i \rightarrow \infty} p_i = 1$.
\end{definition}

We want to investigate whether asymptotic learning is possible when the agents must act deterministically.
Formally, the agents can agree on a protocol $\Pi = \{f_i\}_{i \in \NN}$, i.e., a sequence of functions where $f_i: \{0, 1\}^i \rightarrow \{0, 1\}$ describes the (deterministic) strategy of agent $i$, mapping all the information she sees to the action she takes:
  $a_i = f_i(a_1, \ldots, a_{i-1}, s_i)$.
Note that when we fix a deterministic protocol, the probability $p_i$ of agent $i$ answering correctly is only over the randomness in the signals $s_1, \ldots, s_i$ drawn from $D_\theta$.


\section{Our Deterministic Protocol}

We start with two characterization results as a warm up before stating our protocol.

Fix a deterministic protocol $\Pi$.

{\bf Fact 1.} Suppose that for some $\eps > 0$, infinitely many agents disregard the information provided by their predecessors and act only based on their private signals with probability at least $\eps$. Then learning does not occur under $\Pi$.

{\bf Fact 2.} Suppose that there is a nonzero probability that only finitely many agents take an action which depends on their private signal (and not just their predecessors' actions). Then asymptotic learning does not occur under $\Pi$. 

The intuition behind Fact 1 is that whenever an agents relies on her own signal, she has a constant probability of choosing an incorrect action. Fact 2 is a consequence of the observation that if only finitely many agents rely on their own signals then an outside observer cannot learn the state with probability tending to one, and hence neither can the agents.

Thus, in any protocol that achieves asymptotic learning, infinitely many agents must rely on their signals, but the probability that agent $i$ relies on her signal must vanish as $i$ tends to infinity. This is easily achieved using a randomized protocol, by asking agent $i$ to act according to her signal with (diminishing) probability $1/i$. Since $1/i$ is not summable, by the Borel-Cantelli lemma there will almost surely be infinitely many agents who reveal their signal.

Our deterministic protocol follows the same intuition.
We want infinitely many agents to reveal their signals, but each individual agent to reveal with diminishing probabilities.
Our algorithm can be viewed as derandomizing the random protocol described above, using the randomness in the agents' private signals.
The actions of the revealing agents (which are equal to their private signals) are observed by other agents to learn about $\theta$, and are simultaneously used as randomness to select the next revealing agent.
It is worth noting that even though the sequence of revealing agents is random, each agent (who observes the entire history of actions) knows exactly which subset of predecessors revealed their private signals.

\begin{figure}[h]
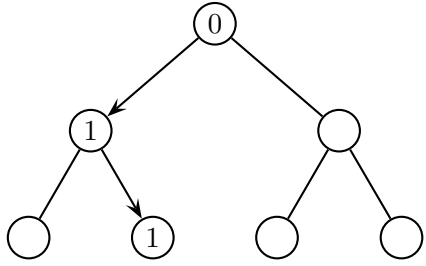

\centering
\psset{unit=0.8cm, arrowsize=0.2 1, fillstyle=solid}
$
\psmatrix[colsep=0.3, rowsep=1, mnode=circle]
            &   &   & 0                                          \\
            & 1 &   &   &             & \phantom{0}              \\
\phantom{0} &   & 1 &   & \phantom{0} &             & \phantom{0}
\ncline{->}{1,4}{2,2}
\ncline{-}{2,2}{3,1}
\ncline{->}{2,2}{3,3}
\ncline{-}{1,4}{2,6}
\ncline{-}{2,6}{3,5}
\ncline{-}{2,6}{3,7}
\endpsmatrix
$
\caption{An example of how our deterministic protocol selects the revealing agents. We assign the agents to nodes on a complete binary tree level by level, the private signal of the revealing agents are labelled on their nodes.}
\label{fig:braess}
\end{figure}

Formally, let $t_k$ denote the index of the $k$-th revealing agent.
We define $t_1 = 1$, $t_2 = a_{t_1} + 2$, and
\[
t_k = \sum_{j=1}^{k-1} a_{t_j} \cdot 2^{j-1} + 2^{k-1}.
\]
The definition of $t_k$ partitions the agents into groups of size $2^j$ and picks exactly one agent in each group:
  $t_1 \in \{1\}, t_2 \in \{2, 3\}, \ldots, t_k \in \{2^{k-1}, \ldots, 2^{k}-1\}$.
The actions of all the revealing agents $(a_{t_{k-1}}, \ldots, a_{t_1})$ can be viewed as a number represented in binary,
  which decides the index of the next revealing agent $t_k$.
We define the strategy of each agent depending on if she is revealing or not.
For a given $i$, let $k$ be the unique integer with $2^{k-1} \leq i < 2^k$. Then
\[ f_i(a_1, \cdots, a_{i-1}, s_i) = \begin{cases} g_k(a_{t_1}, \dots, a_{t_{k-1}}, s_i) & \text{ if } i \neq t_k, \\
  s_i & \text{ if } i = t_k \end{cases} \]
  
where

\[ g_k(x_1, \dots, x_k) = \begin{cases} 0 & \text{ if } \frac{1}{k}\sum_{j=1}^{k} x_j \leq \frac{q_0 + q_1}{2}, \\
  1 & \text{ if } \frac{1}{k}\sum_{j=1}^{k} x_j > \frac{q_0 + q_1}{2}. \end{cases} \]

Theorem~\ref{thm:main} is a direct corollary of Lemma~\ref{lem:main}.
\begin{lemma}
\label{lem:main}
Let $q_0, q_1$ denote the parameters of the Bernoulli distributions $D_0$ and $D_1$ respectively.
We assume that $\eps < q_0 < q_1 < 1-\eps$ and $q_1 - q_0 > 2\eps$ for some constant $\eps > 0$.
The protocol $\{f_i\}_{i \in \NN}$ defined above satisfies $p_n \ge 1 - 2n^{-\eps^2}$ for all $n \in \NN$.
\end{lemma}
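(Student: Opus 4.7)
Fix $n \in \NN$ and let $k$ be the unique integer with $2^{k-1} \le n < 2^k$, so agent $n$ belongs to group $k$. I would upper-bound the error by splitting on whether $n$ is the group-$k$ revealer:
\[
1 - p_n \;\le\; \Pr[a_n \ne \theta,\, n = t_k] \;+\; \Pr[a_n \ne \theta,\, n \ne t_k].
\]
The single probabilistic observation that makes everything work is that, conditional on $\theta$, the revealed signals $s_{t_1}, s_{t_2}, \dots$ are i.i.d.\ Bernoulli$(q_\theta)$. This follows by induction on $j$: the index $t_j$ is a function of $s_{t_1}, \dots, s_{t_{j-1}}$ and satisfies $t_j < 2^j$, so $t_j$ selects a fresh, previously-unsampled agent whose signal is an independent Bernoulli$(q_\theta)$. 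Because $n \ge 2^{k-1} > t_j$ for every $j \le k-1$, the signal $s_n$ is likewise an independent Bernoulli$(q_\theta)$ sample, independent of the event $\{n = t_k\}$ (which lives in the $\sigma$-algebra generated by the revealed signals).

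Given this, both terms fall out quickly. Writing $n - 2^{k-1}$ in binary as $(b_1, \dots, b_{k-1})$, the definition of $t_k$ gives $\{n = t_k\} = \{s_{t_j} = b_j \text{ for all } j\}$, so
\[
\Pr[n = t_k \mid \theta] \;=\; \prod_{j=1}^{k-1} q_\theta^{b_j}(1-q_\theta)^{1-b_j} \;\le\; (1-\eps)^{k-1},
\]
using $\eps < q_\theta < 1 - \eps$. On this event $a_n = s_n$ errs with probability $1 - q_\theta \le 1 - \eps$, independently of the event, so $\Pr[a_n \ne \theta,\, n = t_k \mid \theta] \le (1-\eps)^k$. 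For the non-revealing term, $a_n$ is the $(q_0+q_1)/2$-threshold applied to the empirical mean of $k$ i.i.d.\ Bernoulli$(q_\theta)$ samples, whose true mean $q_\theta$ lies on the correct side of the threshold at distance $(q_1 - q_0)/2 > \eps$. Hoeffding's inequality then yields $\Pr[a_n \ne \theta,\, n \ne t_k \mid \theta] \le e^{-2k\eps^2}$.

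To conclude, $1 - p_n \le (1-\eps)^k + e^{-2k\eps^2}$, and I finish with elementary algebra: since $k > \log_2 n$ and the hypothesis $\eps < q_0 < 1 - \eps$ forces $\eps < 1/2 < 1/\ln 2$, we get $(1-\eps)^k \le e^{-\eps k} \le n^{-\eps / \ln 2} \le n^{-\eps^2}$, and similarly $e^{-2k\eps^2} \le n^{-2\eps^2/\ln 2} \le n^{-\eps^2}$ (using $2/\ln 2 > 1$). Adding gives $1 - p_n \le 2 n^{-\eps^2}$. The only place where any real thought is required is the i.i.d.\ claim for the revealed signals, since the indices $t_j$ are themselves random variables depending on the signals being sampled; everything else is Hoeffding plus arithmetic.
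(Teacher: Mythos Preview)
Your proof is correct and follows essentially the same route as the paper: the paper also splits on whether agent $n$ is revealing, bounds the revealing probability by $(1-\eps)^{k-1}$ via the binary-representation argument (Lemma~\ref{lem:prob-on-path}), applies Hoeffding to the $k$ i.i.d.\ revealed signals in the non-revealing case (Lemma~\ref{lem:prob-chernoff}), and finishes with a union bound. Your write-up is slightly tighter in that you make the i.i.d.\ claim for $(s_{t_j})_j$ explicit and your arithmetic uses $k>\log_2 n$ cleanly, but the structure is identical.
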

Lemma~\ref{lem:main} follows immediately from  Lemmas~\ref{lem:prob-on-path}~and~\ref{lem:prob-chernoff}.

\begin{lemma}
\label{lem:prob-on-path}
The probability of agent $n$ being a revealing agent is at most $n^{-\eps}$.
\end{lemma}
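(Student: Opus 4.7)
My plan is to directly compute the probability that agent $n$ is the $k$-th revealing agent, where $k$ is the unique integer with $2^{k-1} \le n < 2^k$. By the recursion $t_k = \sum_{j=1}^{k-1} a_{t_j} \cdot 2^{j-1} + 2^{k-1}$, the event $\{t_k = n\}$ is equivalent to $(a_{t_1}, \ldots, a_{t_{k-1}})$ matching the binary expansion $(b_1, \ldots, b_{k-1})$ of $n - 2^{k-1}$. Since each revealing agent simply outputs her signal, this in turn reduces to the event $s_{t_j} = b_j$ for every $j = 1, \ldots, k-1$.

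The key step is to show that, conditional on $\theta$, the revealing signals $s_{t_1}, s_{t_2}, \ldots$ are themselves i.i.d.\ Bernoulli$(q_\theta)$. I would prove this by induction on $j$. The base case $t_1 = 1$ is deterministic, so $s_{t_1} = s_1 \sim \text{Bernoulli}(q_\theta)$. For the inductive step, at stage $j$ the index $t_j$ is a deterministic function of $s_{t_1}, \ldots, s_{t_{j-1}}$ and lies in the block $\{2^{j-1}, \ldots, 2^j - 1\}$, which is disjoint from $\{t_1, \ldots, t_{j-1}\}$. Since the underlying signals are i.i.d.\ Bernoulli$(q_\theta)$ given $\theta$, the signal at any fresh index is independent of signals at earlier indices; a short conditioning argument on the realized values of $(t_1, \ldots, t_j)$ confirms that $s_{t_j}$ is an independent Bernoulli$(q_\theta)$.

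Given this independence, $\Pr[t_k = n \mid \theta]$ factors as $\prod_{j=1}^{k-1} \Pr[s = b_j \mid \theta]$, with each factor equal to $q_\theta$ or $1 - q_\theta$ and therefore bounded by $\max(q_\theta, 1 - q_\theta) \le 1 - \epsilon$, using the hypothesis $\epsilon < q_0 < q_1 < 1 - \epsilon$. Averaging over $\theta$ yields $\Pr[t_k = n] \le (1-\epsilon)^{k-1}$. Converting this to the target $n^{-\epsilon}$ then uses $n \ge 2^{k-1}$ together with the elementary inequality $1 - \epsilon \le 2^{-\epsilon}$, which follows from $-\ln(1-\epsilon) \ge \epsilon > \epsilon \ln 2$.

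I expect the main obstacle to be the inductive independence argument: one must carefully separate the random sequence of indices $(t_j)$ from the conditionally i.i.d.\ signals at those indices, since the $t_j$ are themselves functions of previous signals and one cannot naively invoke independence of the underlying $s_i$. The fact that each newly chosen index lies in a block disjoint from all previously revealed positions is precisely what makes the induction go through; once this is in place, the remainder is a short calculation.
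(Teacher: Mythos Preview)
Your proposal is correct and follows essentially the same approach as the paper: identify $\{t_k=n\}$ with a fixed length-$(k-1)$ bit string of revealed signals, bound each matching probability by $1-\eps$ using the i.i.d.\ structure, and convert $(1-\eps)^{k-1}$ to $n^{-\eps}$ via $n\ge 2^{k-1}$. The only difference is cosmetic: the paper simply asserts that the revealed signals are i.i.d.\ from $D_\theta$ and uses $1-\eps\le e^{-\eps}$ in the final step, whereas you supply the inductive justification for the i.i.d.\ claim and use the equivalent inequality $1-\eps\le 2^{-\eps}$.
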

\begin{proof}
Fix an index $i$ with $2^{k-1} \le i < 2^k$, i.e., agent $i$ is on the $k$th level of the binary tree.
If agent $i$ is selected as a revealing agent, we must have $t_k = i$.
This happens if and only if the private signals of the first $k-1$ revealing agents form the binary representation of the number $(i-2^{k-1})$.
In other words, there is a unique path from the root of the tree to agent $i$ and we must always take the correct edges to reach $i$.
Observe that the private signals of the first $k-1$ revealing agents are drawn i.i.d. from $D_\theta$.
Since $\eps < q_0 < q_1 < 1-\eps$, each private signal matches the binary representation of $i$ with probability at most $1-\eps$.
Therefore, we conclude that the probability agent $n$ with $n \ge 2^{k-1}$ is on the paths is at most
\[
(1-\eps)^{k-1} \le e^{-\eps(k-1)} = \left(2^{k-1}\right)^{-\eps \log_2(e)} \le n^{-\eps \log_2(e)} \le n^{-\eps}.
\]
where we use in the first step that $1-x \le e^{-x}$ for all $x \in \RR$.
\end{proof}

\begin{lemma}
\label{lem:prob-chernoff}
When agent $n$ is not a revealing agent, she answers correctly with probability at least $1-n^{-\eps^2}$.
\end{lemma}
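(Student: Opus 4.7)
The plan is to identify agent $n$'s decision as a threshold applied to $k$ i.i.d.\ Bernoulli samples and then invoke Hoeffding's inequality. Let $k$ be the unique integer with $2^{k-1} \le n < 2^k$. If agent $n$ is not revealing, her action equals $g_k(a_{t_1}, \ldots, a_{t_{k-1}}, s_n)$, and since each $a_{t_j}$ equals the revealing agent's private signal $s_{t_j}$, the question reduces to bounding the probability that the empirical mean of $k$ binary values lands on the wrong side of the threshold $(q_0+q_1)/2$.

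First I would argue that, conditional on $\theta$, the $k$ values $s_{t_1}, \ldots, s_{t_{k-1}}, s_n$ are i.i.d.\ Bernoulli with parameter $q_\theta$. Since $t_j \le 2^j - 1 < 2^{k-1} \le n$ for every $j \le k-1$, the indices $t_1, \ldots, t_{k-1}, n$ are pairwise distinct, so the underlying signals are mutually independent by the model. Even though each $t_j$ is itself random (a function of $s_{t_1}, \ldots, s_{t_{j-1}}$), a standard tower-property argument shows that $s_{t_j}$ still has the Bernoulli($q_\theta$) marginal and is independent of the earlier selected signals. Assuming WLOG $\theta = 0$ (the other case is symmetric), the sample mean has expectation $q_0$ and the threshold satisfies $(q_0+q_1)/2 \ge q_0 + \eps$ by the hypothesis $q_1 - q_0 > 2\eps$, so Hoeffding's inequality gives that the probability of an incorrect answer is at most $\exp(-2k\eps^2)$. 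From $n < 2^k$ we get $k > \log_2 n$, whence $\exp(-2k\eps^2) < n^{-2\eps^2 \log_2 e} \le n^{-\eps^2}$, using $2\log_2 e > 1$.

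The main subtlety is that the stated conclusion is phrased conditionally on agent $n$ being non-revealing, and this event excludes the one vector $(s_{t_1}, \ldots, s_{t_{k-1}})$ encoding the binary representation of $n - 2^{k-1}$, which could in principle disturb the i.i.d.\ structure used above. The cleanest workaround, and all that Lemma~\ref{lem:main} really needs, is to bound the joint event via $\Pr[\text{agent } n \text{ non-revealing and wrong}] \le \Pr[g_k(s_{t_1}, \ldots, s_{t_{k-1}}, s_n) \ne \theta]$, to which the unconditional Hoeffding bound applies directly. Combined with Lemma~\ref{lem:prob-on-path}, this yields $p_n \ge 1 - n^{-\eps} - n^{-\eps^2} \ge 1 - 2 n^{-\eps^2}$, since $\eps^2 < \eps$ implies $n^{-\eps} \le n^{-\eps^2}$.
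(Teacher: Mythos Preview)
Your approach is essentially the same as the paper's: identify the $k$ relevant signals $s_{t_1},\ldots,s_{t_{k-1}},s_n$, observe they are i.i.d.\ Bernoulli($q_\theta$), and apply Hoeffding to obtain $\exp(-2k\eps^2)\le n^{-\eps^2}$. You are in fact more careful than the paper in two places---the randomness of the indices $t_j$ and the effect of conditioning on ``non-revealing''---both of which the paper's proof passes over without comment; your suggestion to bound the joint event rather than the conditional probability is exactly how the bound is used in the proof of Lemma~\ref{lem:main}.
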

\begin{proof}
Fix an index $n$ with $2^{k-1} \le n < 2^{k}$.
Agent $n$ has access to the actions of the revealing agents $t_1, \ldots, t_{k-1}$ since $t_k < 2^{k-1}$.
Because these agents reveal their private signals, agent $i$ has at least $k$ i.i.d. samples from $D_\theta$ (including her own signal).
She can simply take the average of these samples and check if the empirical mean is closer to $q_0$ or $q_1$ to guess the hidden state $\theta$.
Note that most of the agents are not revealing, and they are acting rationally.

Let $\bar q = \frac{q_0 + q_1}{2}$.
Since we assumed that $q_1 - q_0 \ge 2\eps$, we have that $|\bar q - q_\theta| \ge \eps$.
Therefore, as long as the empirical mean of the $k$ samples has additive error less than $\eps$, agent $i$ will be able to guess $\theta$ correctly.
By standard application of Chernoff-Hoeffding bounds, we have
\[
\Pr\left[ \left| \frac{1}{k}\left(\sum_{j=1}^{k-1} a_{t_j} + s_i\right) - q_\theta \right| \ge \eps \right] \le \exp(-2k\eps^2) \le n^{-\eps^2}. \qedhere
\]
\end{proof}

We are now ready to prove Lemma~\ref{lem:main}.

\begin{proof}[Proof of Lemma~\ref{lem:main}]
We prove the protocol $\{f_i\}_{i \in \NN}$ defined as above satisfies $p_n \ge 1-{2n^{-\eps^2}}$.
The claim then follows from the assumption that $\eps > 0$ is a constant.

Agent $n$ can pick the wrong action due to one of the two reasons:
\begin{enumerate}
\item Agent $n$ is revealing, and her private signal is different from $\theta$. The probability of this event is upper bounded by the probability that agent $n$ is revealing, which is $n^{-\eps}$ by Lemma~\ref{lem:prob-on-path}. 
\item The agent is not revealing, but the previous information leads to the wrong conclusion on $\theta$. By Lemma~\ref{lem:prob-chernoff}, this happens with probability at most  $n^{-\eps^2}$.
\end{enumerate}
We conclude the proof by taking a union bound over these two cases.
\end{proof}


\bibliographystyle{plain}
\bibliography{herding}

\end{document}